\DeclareSymbolFontAlphabet{\amsmathbb}{AMSb}%
\definecolor{cblue}{rgb}{0.16, 0.32, 0.75}
\definecolor{cred}{rgb}{0.7, 0.11, 0.11}
\def\<{\langle}
\def\>{\rangle}
\def\oper{{\mathchoice{\rm 1\mskip-4mu l}{\rm 1\mskip-4mu l}
		{\rm 1\mskip-4.5mu l}{\rm 1\mskip-5mu l}}}
\newcommand{\tr}{\mathrm{Tr}}
\newtheorem{Theorem}{Theorem}
\newtheorem{Proposition}{Proposition}
\newcommand{\hilb}{\mathcal{H}}
\newcommand{\vac}{\mathrm{vac}}
\renewcommand{\i}{\mathrm{i}}
\newcommand{\e}{\mathrm{e}}
\newcommand{\g}{\mathrm{g}}
\begin{document}
	
	\title{
		Quantum regression beyond the Born-Markov approximation\\
		for generalized spin-boson models
	}
	
	\author{Davide Lonigro}
	\email{davide.lonigro@ba.infn.it}
	\affiliation{Dipartimento di Fisica and MECENAS, Universit\`{a} di Bari, I-70126 Bari, Italy}
	\affiliation{INFN, Sezione di Bari, I-70126 Bari, Italy}
	
	\author{Dariusz Chru\'sci\'nski}
	\email{darch@fizyka.umk.pl}
	\affiliation{Institute of Physics, Faculty of Physics, Astronomy and Informatics, Nicolaus Copernicus University, Grudziadzka 5/7, 87-100 Toru\'n, Poland}
	
	\date{\today}
	
	\begin{abstract}
		
		The quantum regression formula for an open quantum system consists in an infinite hierarchy of conditions for its multi-time correlation functions, thus requiring full access to the total ``system+environment'' evolution, and providing a stronger requirement than CP-divisibility. Here, we analyze CP-divisibility and check the validity of quantum regression beyond the Born-Markov approximation (e.g. weak coupling limit) for a class of generalized spin-boson (GSB) models giving rise to a multi-level amplitude-damping evolution; in all cases, it is possible to engineer the system-bath coupling in such a way that quantum regression is exactly satisfied.
	\end{abstract}
	
	\maketitle
	
	\section{Introduction}
	The description of a simple quantum system, for example an atom or molecule, interacting with a thermal environment or an electromagnetic field, is often performed by modeling the former as a few-level system interacting with a structured boson bath; depending on the physical scenario, the energy spectrum of the bath can be either discrete (e.g. a monochromatic laser mode) or continuous.
	
	Spin-boson models, the simplest instances of such models, occupy a central role in the theory of open quantum systems~\cite{Breuer,Legget,Weiss,RIVAS,Ingold}: their simple structure provides a prototypical description of several natural mechanisms like quantum noise, dissipation, and decoherence~\cite{Zoller,Plenio,Car,Zurek,DEC1,DEC2}. For this reason, apart from their fundamental interest, they also find applications in quantum optics~\cite{Zoller,Plenio,Car,Breuer,RIVAS}, quantum information and simulation~\cite{SIM}, and quantum phase transitions~\cite{phase}, to name a few. The interest in such models has also led to an extensive study of their mathematical properties~\cite{davies,fannes,arai,amann,hubner,spohn,hirokawa,hirokawa2}; the wider class of generalized spin-boson (GSB) models, which accommodates multilevel atoms, has also been investigated~\cite{arai2,arai3,takaesu,teranishi,gsb}.
	
	In this article we analyze such models in connection to quantum (non)Markovianity. Quantum non-Markovian dynamics has attracted a lot of attention in recent years (cf.\ recent reviews~\cite{NM1,NM2,NM3,NM4}), and the Markovian property of the evolution induced by spin-boson models has been analyzed by several authors~\cite{Breuer,RIVAS,NM1,NM2,NM3}. Usually, one approaches this problem by studying the properties of the corresponding dynamical map $\Lambda_t$, defined as the reduced evolution of the system (the spin) obtained by tracing out the degrees of freedom of the bath (or environment), represented by a boson field~\cite{Breuer,RIVAS}. This procedure gives rise to a family of maps $\Lambda_t$ such that, for any $t \geq 0$, $\Lambda_t$ is completely positive and trace-preserving (CPTP). The evolution of the system is usually said to be Markovian if $\Lambda_t$ is CP-divisible~\cite{RHP}, that is, for any $t>s$ there exists a CPTP propagator $V_{t,s}$ such that $\Lambda_t = V_{t,s} \Lambda_s$. If $V_{t,s}$ is only positive and trace-preserving, one usually calls $\Lambda_t$ to be P-divisible (cf.~\cite{Sabrina} for a more general classification). A paradigmatic example of CP-divisible evolution is provided by the celebrated Markovian semigroup $\Lambda_t = \e^{t \mathcal{L}}$, with $\mathcal{L}$ being the Gorini-Kossakowski-Lindblad-Sudarshan (GKLS) generator~\cite{GKS,L,Alicki}.
	
	A slightly different concept was proposed in Ref.~\cite{BLP}, where Markovianity is connected to the monotonicity property of the trace norm, that is, for any $t > s$ one requires  $  \| \Lambda_t(\rho_1-\rho_2)\|_1 \leq \| \Lambda_s(\rho_1-\rho_2)\|_1 $ for any pair of initial states $\rho_1$ and $\rho_2$. The violation of this inequality was interpreted in Ref.~\cite{BLP} as the presence of information backflow from the environment to the system, which thus provides a clear indicator of non-Markovian memory effects. The intricate relations between these two notions were studied in Refs.~\cite{Angel,Haikka} and further in Refs.~\cite{BOGNA,PRL-Angel}. Both concepts find a lot of applications in the analysis of several systems (including spin-boson models)~\cite{NM1,NM2,NM3}.
	
	However, it should be stressed that the notion of Markovianity based on the properties of the corresponding dynamical map is not directly related to the original concept of classical Markov stochastic process~\cite{Kampen}. As already pointed out in Refs.~\cite{Hanggi1,Hanggi2}, the divisibility property of the classical dynamical map, represented by a family of stochastic matrices, only provides a necessary condition for the validity of the well-known Markov property for the family of conditional probabilities~\cite{Kampen}:
	\begin{equation}\label{Markov}
		p\!\left(x_n,t_n|x_{n-1},t_{n-1};\ldots;x_1,t_1\right) = p\!\left(x_n,t_n|x_{n-1},t_{n-1}\right).
	\end{equation}
	To provide a quantum generalization of this concept, one therefore needs access to the full system-bath evolution, and not only to the reduced dynamics described by the dynamical map $\Lambda_t$. A mathematical formulation of quantum Markov stochastic processes was proposed in Refs.~\cite{Lewis,Lindblad,Accardi}; it turns out that the quantum Markov property, in analogy to the classical condition~\eqref{Markov}, provides a hierarchy of nontrivial conditions for multi-time correlation functions known as the quantum regression formula~\cite{Lax}. The latter essentially states that the multi-time correlations of operators on the system, derived in terms of the full system-bath evolution, can be recovered in terms of the dynamical map $\Lambda_t$ alone. The quantum regression formula is satisfied whenever the Born-Markov approximation (which consists in assuming that, throughout the evolution, the total ``system+bath'' state has a product form $\rho_{\rm SB}(t) \approx \rho(t) \otimes \rho_{\rm B}$) is valid.
	
	Beyond the Born-Markov approximation, the validity of the quantum regression formula provides highly nontrivial constraints for the very structure of the system-bath interaction. Interestingly, the validity of the quantum regression formula is essentially equivalent to the Markovianity condition proposed in a recent series of papers~\cite{kavan1,kavan2,kavan3} (cf. also the recent review~\cite{Kavan-PRX} and~\cite{NM4} for a comparative analysis). In this approach, Markovianity is shown to correspond to a factorization property of the so-called quantum process tensor of the system. Actually, Ref.~\cite{NM4} provides the whole hierarchy of various diverse concepts of Markovianity and stresses the importance of quantum regression. The analysis of Markovianity based on the quantum regression formula was already initiated in Refs.~\cite{Francesco} and~\cite{Bassano}; these papers show that systems with CP-divisible dynamics may violate quantum regression.
	
	In this article we analyze CP-divisibility and quantum regression for a class of GSB models leading to a multi-level generalization of the familiar amplitude-damping evolution. Such models can describe a multilevel atom with $n\geq1$ excited states and one common ground state, coupled with a structured multi-mode boson field, the energy of each mode having values in a continuous set; the interaction between the atom and the field is modulated by a family of coupling functions. We find sufficient conditions for CP-divisibility to hold and, in particular, a peculiar (and unique) choice of coupling leading to semigroup dynamics at all times is discussed. Besides, in the latter case, the quantum regression formula is shown to hold. This is a highly nontrivial result: contrarily to the standard folklore, in general semigroup dynamics is \textit{not} sufficient to ensure the validity of quantum regression. However, for this class of models, the two notions do coincide.
	
	\section{Spin-boson model and amplitude-damping} 
	Before analyzing in detail the family of GSB models, we shall start by recalling the main features of the paradigmatic spin-boson model leading to amplitude-damping dynamics. Consider a two-level system (qubit) on $\hilb_{\rm S}\simeq\mathbb{C}^2$, with ground and excited state $\ket{\g}$, $\ket{\e}$, interacting with a structured boson field. The energy of the qubit in its excited state will be denoted as $\omega_{\e}$, with its ground energy being set to zero without loss of generality. The total ``system + bath'' Hamiltonian has the following form ($\hbar=1$)
	\begin{equation}\label{eq:H}
		\mathbf{H}=\omega_{\e}\ket{\e}\!\!\bra{\e}\otimes\oper_{\rm B}+\oper_{\rm S}\otimes H_{\rm B} +H_{\rm int},
	\end{equation}
	where $H_{\rm B} = \int \mathrm{d}\omega\;\omega\,b^\dag(\omega) b(\omega)$, and the interaction term reads
	\begin{equation}\label{eq:hint_amplitude}
		H_{\rm int}=\sigma_-\otimes\int \mathrm{d}\omega\; f(\omega)b^\dag(\omega)+\mathrm{h.c.},
	\end{equation}
	where h.c. stands for the Hermitian conjugated term. Above, we use the standard notations $\sigma_-=\ket{\g}\!\!\bra{\e}$, $\sigma_+=(\sigma_-)^\dag=\ket{\e}\!\!\bra{\g}$, and the integral runs over all energies that are accessible to the boson field; $f(\omega)$ is a function (form factor) which modulates the atom-bath coupling, and the bath creation and annihilation operators $b^\dag(\omega)$, $b(\omega)$ satisfy the canonical commutation relations: $[b(\omega),b(\omega')]=0$ and $[b(\omega),b^\dag(\omega')]= \delta(\omega-\omega')$. The vacuum state of the bath will be denoted by $\ket{\vac}$.
	
	The reduced evolution of the system starting from an arbitrary uncorrelated state in the form $\rho\otimes\ket{\vac}\!\bra{\vac}$, given by
	\begin{equation}\label{eq:reduced}
		\Lambda_t(\rho)=\mathrm{Tr}_{\rm B}\left(\e^{-\i t\mathbf{H}}\rho\otimes\ket{\vac}\!\bra{\vac}\e^{\i t\mathbf{H}}\right),
	\end{equation}
	can be computed explicitly. Indeed, the interaction term $H_{\rm int}$ has a rotating-wave form, which implies that the total Hamiltonian preserves the total number of excitations; this conservation law allows one to  compute explicitly the evolution $U_t = \e^{-\i t\mathbf{H}}$ in the single-excitation sector~\cite{Breuer,fl}, that is, if $|\Psi(0)\> = |\e\> \otimes |{\rm vac}\>$, then	
	\begin{equation}\label{}
		|\Psi(t)\> = a(t) |\e\> \otimes |{\rm vac}\> + |\g\> \otimes b^\dagger(\xi_t) |{\rm vac}\> ,
	\end{equation}
	where the function $a(t)$ satisfies the non-local equation	
	\begin{equation}\label{dot-a}
		\i \dot{a}(t) = \omega_\e a(t) + \int_0^t G(t-s) a(s)\;\mathrm{d}s, \;  \ \ a(0)=1
	\end{equation}
	with the memory kernel $G(t) = -\i\int \mathrm{d} \omega\;|f(\omega)|^2 \e^{-\i \omega t}$, and  $ b^\dagger(\xi_t) = \int \mathrm{d}\omega\;\xi_t(\omega) b^\dagger(\omega)$, where the boson profile $\xi_t(\omega)$ is fully characterized by the function $a(t)$ and the form factor $f(\omega)$. Clearly, the state $|\g\> \otimes  |{\rm vac}\> $ does not evolve. Consequently, the reduced dynamics~\eqref{eq:hint_amplitude} reads as follows:
	\begin{equation}\label{eq:amplitude}
		\Lambda_t(\rho)=\begin{pmatrix}
			|a(t)|^2\,\rho_{\e\e} & a(t)\rho_{\e\g}\\
			a(t)^*\rho_{\g\e}& \rho_{\g\g} + (1-|a(t)|^2)\rho_{\e\e}
		\end{pmatrix} .
	\end{equation}
	Eq.~\eqref{eq:amplitude} defines an amplitude-damping channel for any $t>0$; notice that the channel is uniquely characterized by the function $a(t)$ which satisfies $|a(t)|\leq1$. This channel is completely positive and trace-preserving (CPTP) by construction; vice versa, it can be proven (cf. Prop.~\ref{prop:app}) that, for \textit{any} given function $a(t)$, then the following conditions are equivalent:
	\begin{itemize}
		\item $\Lambda_t$ is completely positive;
		\item $\Lambda_t$ is positive;
		\item $|a(t)|\leq1$ for all $t$.
	\end{itemize}
	Therefore, for this model, complete positivity and positivity are equivalent.
	
	\section{Markovianity, divisibility, and semigroup dynamics}
	Recall that the evolution represented by the dynamical map $\Lambda_t$ is divisible if, for any $t > s$, there exists an intermediate map (a propagator) $V_{t,s}$ such that $\Lambda_t = V_{t,s} \Lambda_s$. $\Lambda_t$ is called P-divisible whenever $V_{t,s}$ is positive and trace-preserving, and CP-divisible whenever $V_{t,s}$ is CPTP. The divisibility of the dynamical map is often used to analyze the Markovianity of the corresponding evolution~\cite{NM1}.
	
	Recall that, for an invertible map, P-divisibility (and hence, in our case, CP-divisibility) is equivalent to the monotonicity property~\cite{Angel,Sabrina}
	\begin{equation}\label{<0}
		\frac{\mathrm{d}}{\mathrm{d}t} \| \Lambda_t(X)\|_1 \leq 0 ,
	\end{equation}
	for any Hermitian map in the Hilbert space of the system ($\|X\|_1$ stands for the trace norm). In particular, Eq.~\eqref{<0} implies that, for any pair of initial states $\rho_1$ and $\rho_2$, one has $ \frac{\mathrm{d}}{\mathrm{d}t} \| \Lambda_t(\rho_1-\rho_2)\|_1 \leq 0$. The violation of this inequality was interpreted in Ref.~\cite{BLP} as an information backflow from the environment (bath) to the system, which provides a clear indicator of memory effects.
	In our model, the very condition~\eqref{<0} is equivalent to
	\begin{equation}  \label{MON}
		\frac{\mathrm{d}}{\mathrm{d}t}|a(t)|\leq0.
	\end{equation}
	Therefore, the P-divisibility of $\Lambda_t$ is equivalent to the monotonicity property~\eqref{MON} of $a(t)$. Interestingly, and nontrivially, this condition is indeed necessary and sufficient for CP-divisibility as well:
	\begin{Proposition}\label{prop:div}
		Let $\Lambda_t$ as defined in Eq.~\eqref{eq:amplitude}, with $|a(t)|\neq0$. Then the process is invertible, and the following conditions are equivalent:
		\begin{itemize}
			\item $\Lambda_t$ is CP-divisible;
			\item $\Lambda_t$ is P-divisible;
			\item Eq.~\eqref{MON} holds at all times.
		\end{itemize}
	\end{Proposition}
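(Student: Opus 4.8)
The plan is to reduce the claimed three-way equivalence to the single-map result of Prop.~\ref{prop:app}, by showing that the intermediate propagator $V_{t,s}$ is itself an amplitude-damping channel whose characteristic function is the ratio $a(t)/a(s)$. Once this is established, the coincidence of CP- and P-divisibility becomes a direct consequence of the already-proven fact that, for such channels, complete positivity and positivity are the same condition.

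First I would establish invertibility. From Eq.~\eqref{eq:amplitude} the map acts on the matrix elements of $\rho$ by $\rho_{\e\e}\mapsto|a(t)|^2\rho_{\e\e}$, $\rho_{\e\g}\mapsto a(t)\rho_{\e\g}$, $\rho_{\g\e}\mapsto a(t)^*\rho_{\g\e}$, and $\rho_{\g\g}\mapsto\rho_{\g\g}+(1-|a(t)|^2)\rho_{\e\e}$. Since $|a(t)|\neq0$, each relation can be inverted — one recovers $\rho_{\e\e}$ by dividing by $|a(t)|^2$, then the remaining elements — so $\Lambda_t$ is a bijection and $\Lambda_t^{-1}$ is again of the same structural form, but with $a(t)$ replaced by $1/a(t)$. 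This is exactly where the hypothesis $|a(t)|\neq0$ enters, and it makes the propagator $V_{t,s}:=\Lambda_t\Lambda_s^{-1}$ well defined and the unique map obeying $\Lambda_t=V_{t,s}\Lambda_s$.

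The central step is the explicit composition $V_{t,s}=\Lambda_t\Lambda_s^{-1}$. Carrying it out on matrix elements, I expect the off-diagonal entries to acquire the factors $a(t)/a(s)$ and $(a(t)/a(s))^*$, the $(\e,\e)$ entry the factor $|a(t)/a(s)|^2$, and the $(\g,\g)$ entry to reorganize into $\rho_{\g\g}+\bigl(1-|a(t)/a(s)|^2\bigr)\rho_{\e\e}$. In other words $V_{t,s}$ is precisely the amplitude-damping channel of Eq.~\eqref{eq:amplitude} with the single function $a(t)/a(s)$ in the role of $a(t)$; in particular it is automatically trace-preserving. Prop.~\ref{prop:app} then applies verbatim to $V_{t,s}$, giving the equivalence of its complete positivity, its positivity, and the bound $|a(t)/a(s)|\leq1$. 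Consequently $\Lambda_t$ is CP-divisible iff it is P-divisible iff $|a(t)|\leq|a(s)|$ for all $t>s$, and this last statement is nothing but the monotone non-increasing property of $t\mapsto|a(t)|$, i.e. condition~\eqref{MON}.

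The only genuine obstacle is the bookkeeping in the $(\g,\g)$ component: the two cross terms $-(1-|a(s)|^2)$ coming from $\Lambda_s^{-1}$ and $+(1-|a(t)|^2)$ coming from $\Lambda_t$ must combine, after division by $|a(s)|^2$, into exactly $1-|a(t)/a(s)|^2$. Verifying this cancellation is routine, but it is what guarantees that the composition stays inside the amplitude-damping family; every other part of the argument is then inherited from the already-established single-map equivalence and from the elementary equivalence between monotonicity of $|a|$ and the sign condition~\eqref{MON}.
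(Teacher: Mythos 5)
Your proposal is correct and takes essentially the same route as the paper: the paper proves this proposition as a particular case of Prop.~\ref{prop:div2}, whose proof is precisely your construction --- the propagator $V_{t,s}=\Lambda_t\Lambda_s^{-1}$ is shown to lie in the same amplitude-damping family with characteristic object $\mathbb{A}(t)\mathbb{A}(s)^{-1}$ (your scalar $a(t)/a(s)$), after which Prop.~\ref{prop:app} gives the equivalence of complete positivity and positivity of $V_{t,s}$ with the contraction bound, hence with the monotonicity condition~\eqref{MON}. The only difference is that you carry out the qubit ($n=1$) computation directly, including the $(\g,\g)$ cancellation, rather than invoking the general multilevel proposition.
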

\begin{proof}
	Particular case of Prop.~\ref{prop:div2}.
\end{proof}
	Therefore Markovianity, when understood as CP-divisibility, reduces to a simple monotonicity condition for $a(t)$.
	
	In particular, one immediately notices that, for the map~\eqref{eq:amplitude}, a semigroup at all times can be only obtained by choosing the parameters in such a way that the function $|a(t)|$ is exponentially decaying at all times. By Eq.~\eqref{dot-a}, this can only happen if the memory kernel reduces to a Dirac delta, which, in turn, requires a \textit{flat coupling} $|f(\omega)|^2=\mathrm{const.}$ on the full real line. If we set
	\begin{equation}   \label{f-AD}
		|f(\omega)|^2=\frac{\gamma}{2\pi},\quad-\infty\leq\omega\leq\infty,
	\end{equation}
	then we get $a(t)=\e^{-\left(\i\tilde{\omega}_{\rm e}+\frac{\gamma}{2}\right)t}$, $|a(t)|^2=\e^{-\gamma t}$, and thus semigroup dynamics	
	\begin{equation}\label{AD-s}
		\dot{\rho}_t = -\i\,\tilde{\omega}_\e [|\e\>\<\e|,\rho_t] + \frac 12  \gamma\!\left(\sigma_- \rho_t \sigma_+ - \frac 12 \{\sigma_+\sigma_-,\rho_t\} \right) .
	\end{equation}
	Here $\tilde{\omega}_{\e}$ is the \textit{renormalized} excitation energy of the qubit, with the bare one being diverging, cf.\ Ref.~\cite{fl}. We stress that no other choice of coupling would yield exact semigroup dynamics at all times, the correspondence between $f(\omega)$ and $a(t)$ being one-to-one; however, relaxing our request by asking the semigroup property to only hold in a finite time window, one would have infinitely many other choices~\cite{hidden1,hidden2}. Also notice that, as a straightforward consequence of the Paley-Wiener theorem, relaxing our request by asking for semigroup dynamics at $t\to\infty$ would still require the energy spectrum of the boson field to be unbounded from below. \cite{khalfin1,khalfin2}
	
	It should be stressed that the semigroup~\eqref{AD-s} is derived \textit{without} any type of Born-Markov approximation. Usually, to derive a Markovian semigroup, one considers two limiting scenarios for some characteristic time scales of the evolution of the system $\tau_{\rm S}$ and bath $\tau_{\rm B}$ \cite{Breuer,Alicki}. In the \textit{weak coupling limit}, one assumes weak coupling between the system and the bath; moreover, $\tau_{\rm B}$ remains constant while $\tau_{\rm S} \to \infty$. This limit works perfectly for many systems in quantum optics~\cite{Zoller,Plenio,Car}. On the other hand, in the \textit{singular coupling limit}, we have $\tau_{\rm B} \to 0$, meaning that the bath correlation functions are delta-like \cite{Hepp}, and this immediately implies memoryless master equations. 
			
	\section{Quantum regression}
	As we have observed, choosing a flat qubit-boson coupling, the dynamical semigroup $\Lambda_t$ is Markovian in the sense of CP-divisibility (or, equivalently, P-divisibility). We will now show that, interestingly, the model also satisfies the more restrictive conditions for Markovianity provided by the quantum regression theorem~\cite{Breuer,Zoller,NM4}.
	
	Given a fixed state of the bath $\rho_{\rm B}$, the unitary evolution of the ``system + bath'' $\mathcal{U}_t \rho = U_t \rho U_t^\dagger$ giving rise to the reduced evolution of the system $\Lambda_t = {\rm Tr}_{\rm B} \mathcal{U}_t(\rho \otimes\rho_{\rm B})$, we say that a pair $(\mathcal{U}_t,\rho_{\rm B})$ satisfies quantum regression if, for any collection of times $t_n > t_{n-1} > \ldots > t_1 > t_0$ and two sets of system operators $\{X_0,X_1,\ldots,X_n\}$ and  $\{Y_0,Y_1,\ldots,Y_n\}$, one has the following relation between multi-time correlation functions:
	\begin{eqnarray}\label{QR}
		&& {\rm Tr}_{\rm SB}[\tilde{\mathcal{E}}_n \mathcal{U}_{t_n-t_{n-1}} \tilde{\mathcal{E}}_{n-1}\mathcal{U}_{t_{n-1}-t_{n-2}}  \ldots \tilde{\mathcal{E}}_0 \mathcal{U}_{t_0}(\rho \otimes \rho_{\rm B})] \nonumber \\
		&=& {\rm Tr}_{\rm S}[\mathcal{E}_n \Lambda_{t_n-t_{n-1}} \mathcal{E}_{n-1}\Lambda_{t_{n-1}-t_{n-2}}  \ldots \mathcal{E}_0 \Lambda_{t_0}(\rho)] ,
	\end{eqnarray}
	where $\tilde{\mathcal{E}}_k A_{\rm SB} = (X_k \otimes \oper_{\rm B}) A_{\rm SB} (Y_k \otimes \oper_{\rm B})$, and  $\mathcal{E}_k A_{\rm S} = X_k  A_{\rm S} Y_k$, for any ``system-bath'' operator $A_{\rm SB}$ and system operator $A_{\rm S}$. Eq.~\eqref{QR} means that all correlation functions for the ``system+bath'' evolution can be computed in terms of the dynamical map of the system alone. Clearly, this is a highly nontrivial infinite hierarchy of conditions, and CP-divisibility provides only a necessary condition for Eq.~\eqref{QR}. It was shown by D\"umcke~\cite{Dumcke} that the quantum regression formula for multi-time correlation functions holds in the weak coupling limit (the validity of Eq.~\eqref{QR} essentially follows from the Born-Markov approximation).
	
	Let us consider now the spin-boson model~\eqref{eq:hint_amplitude} with flat form factor $f(\omega)$, i.e. Eq.~\eqref{f-AD}. As discussed, the reduced evolution defines a dynamical semigroup. However, the validity of the quantum regression formula~\eqref{QR} is not guaranteed since this semigroup were derived without any approximation based on the Born-Markov condition. The first main observation of this article is summarized in the following	
	\begin{Theorem}\label{thm} Let $\rho_{\rm B} = |{\rm vac}\>\<{\rm vac}|$. Then $(\mathcal{U}_t,\rho_{\rm B})$ satisfies the quantum regression hierarchy~\eqref{QR} provided that $|f(\omega)| = {\rm const}.$ for $\omega \in (-\infty,\infty)$.
	\end{Theorem}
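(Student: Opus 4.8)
The plan is to prove~\eqref{QR} by reducing the full system–bath correlator to a product of single-emission events, one per time window $s_k:=t_k-t_{k-1}$, and then matching this against the amplitude-damping channel~\eqref{eq:amplitude} applied window by window. The conceptual engine is that the flat form factor~\eqref{f-AD} makes the coupling \emph{local in time}: passing to the interaction picture with respect to $H_{\rm B}$, the operator $\int\mathrm d\omega\,f(\omega)\e^{\i\omega t}b^\dagger(\omega)$ collapses, for $|f|^2=\gamma/2\pi$ (a frequency-dependent phase being removable by a bath gauge transformation that preserves $H_{\rm B}$ and $\ket{\vac}$), to $\sqrt{\gamma}\,b^\dagger(t)$ with $[b(t),b^\dagger(t')]=\delta(t-t')$, so the atom couples at each instant to an independent bath ``time-bin'' and the vacuum factorizes as $\ket{\vac}=\bigotimes_t\ket{\vac}_t$. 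This is exactly the white-noise/Markovian structure that fails for any non-flat $f$, and it is the reason regression can hold despite being strictly stronger than CP-divisibility.

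First I would use multilinearity: both sides of~\eqref{QR} are linear in each $X_k$ and $Y_k$, so it suffices to take every $X_k,Y_k$ to be a matrix unit $\ket{\alpha}\!\bra{\beta}$ with $\alpha,\beta\in\{\e,\g\}$. Writing the nested operator on the left as $\mathcal A\,(\rho\otimes\ket{\vac}\!\bra{\vac})\,\mathcal B$, with $\mathcal A=(X_n\otimes\oper_{\rm B})U_{s_n}\cdots(X_0\otimes\oper_{\rm B})U_{s_0}$ and $\mathcal B$ the conjugate string in the $Y_k$, cyclicity of the trace turns the left-hand side into $\sum_{\alpha\beta}\rho_{\alpha\beta}\,\<\Psi_\beta|\Phi_\alpha\>$, where $|\Phi_\alpha\>=\mathcal A\,(\ket{\alpha}\otimes\ket{\vac})$ and $|\Psi_\beta\>=\mathcal B^\dagger(\ket{\beta}\otimes\ket{\vac})$ are generated by alternately propagating with $U_{s_k}$ and inserting a matrix unit. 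Because a $\sigma_+$ insertion can re-excite the atom after a quantum has already been released, these vectors live a priori in Fock sectors with up to $n+1$ bath quanta, and the whole difficulty is to tame this multi-quantum content.

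The crux is a \emph{window-factorization lemma}. Using the time-local coupling, I would show that $U_{s}$ restricted to any state in which the already-populated bath modes sit in earlier time-bins acts as $\ket{\e}\otimes\ket{\chi}\mapsto a(s)\,\ket{\e}\otimes\ket{\chi}+\ket{\g}\otimes b^\dagger(\xi_s)\ket{\chi}$ and $\ket{\g}\otimes\ket{\chi}\mapsto\ket{\g}\otimes\ket{\chi}$, with the freshly emitted profile $\xi_s$ supported in the current window and therefore \emph{orthogonal} to $\ket{\chi}$: the atom never reabsorbs a previously emitted quantum. This is the single-excitation solution of Eq.~\eqref{dot-a} tensored with an untouched past, and it hinges on flatness both through the locality above and through the semigroup identity $a(t+s)=a(t)a(s)$, which holds only for $a(t)=\e^{-(\i\tilde\omega_{\rm e}+\gamma/2)t}$. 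Granting the lemma, I would induct on $n$: peeling off the innermost window, the window-$0$ bath is emitted into time-bins that no later unitary or (system) insertion ever touches, so it may be traced out immediately, and that partial trace is by construction precisely $\Lambda_{s_0}$ acting on the system. What remains is the same correlator with $n-1$ insertions and input state $X_0\,\Lambda_{s_0}(\rho)\,Y_0=\mathcal E_0\Lambda_{s_0}(\rho)$, the future bath still in the vacuum; iterating collapses the left-hand side onto ${\rm Tr}_{\rm S}[\mathcal E_n\Lambda_{s_n}\cdots\mathcal E_0\Lambda_{s_0}(\rho)]$, which is~\eqref{QR}.

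The main obstacle I anticipate is a fully rigorous proof of the window-factorization lemma in the genuinely interacting multi-excitation sectors. One must verify that the two- and higher-quantum dynamics of $\mathbf{H}$ really does reorganize into independent single-emission events with no backflow onto occupied modes---equivalently, that the emitted profiles in disjoint windows are exactly orthogonal and that the free bath rotation does not reintroduce overlap. This ``no-reabsorption'' property is false for a generic form factor, where the memory kernel $G$ in Eq.~\eqref{dot-a} is not a delta and emitted quanta return to the atom; isolating the precise role of the flat coupling here is the technical heart of the argument.
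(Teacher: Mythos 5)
Your proposal is correct and takes essentially the same route as the paper's own proof: your ``window-factorization lemma'' is precisely the no-reabsorption relation $U_{t_1-t_0}\,(\oper_{\rm S}\otimes b^\dagger(\xi_{t_0})) = (\oper_{\rm S}\otimes b^\dagger(\xi_{t_1,t_0}))\,U_{t_1-t_0}$ that the paper imports from Ref.~\cite{hidden2}, combined with the orthogonality of profiles emitted in disjoint time windows, Eqs.~\eqref{xi1}--\eqref{xi2}, and your window-by-window induction mirrors the paper's explicit two-time computation together with its claimed extension to $n$ times. The technical heart you flag as an obstacle --- rigorously establishing no reabsorption in the multi-excitation sectors --- is exactly the ingredient the paper does not re-derive either, handling it by citation rather than by proof.
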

	A direct proof of Eq.~\eqref{QR}, which directly shows the cancellation of all terms which might spoil the validity of Eq.~\eqref{QR}, is reported in Appendix~\ref{app:c} for the two-time scenario, the generalization to the $n$-time scenario following immediately from the structure of the model. One may think that quantum regression follows as a direct consequence of the semigroup property, which holds in both cases. This is not true in general: in order to check whether Eq.~\eqref{QR} holds, one needs access to the unitary evolution of the ``system+bath''; only knowing the reduced dynamics is not sufficient.
	
	\section{Multi-level generalization}
	The spin-boson model discussed above is a remarkable example of open quantum system for which necessary and sufficient conditions for CP-divisibility can be found, and, furthermore, admitting a choice of coupling so that the quantum regression formula is \textit{exactly} satisfied. One may think that such a result is exclusive for the qubit case; in fact, similar results can be proven beyond a qubit scenario. We will now discuss a wider class of generalized spin-boson (GSB) models, admitting an arbitrary number of both excited levels of the system and modes of the boson bath, for which similar results can be obtained. This shows the inherently Markovian nature of the amplitude-damping dynamics.
	
	Consider a quantum system living in $\hilb_{\rm S}=\hilb_{\mathrm{e}}\oplus\hilb_{\mathrm{g}}$, with $\dim\hilb_{\mathrm{e}}=n$ and $\dim\hilb_{\mathrm{g}}=1$. $\hilb_{\mathrm{e}}$ corresponds to an $n$-dimensional excited sector, whereas $\hilb_{\mathrm{g}}$ is spanned by the ground state $|\mathrm{g}\>$. The system is coupled to an $r$-mode bosonic bath ($r\leq n$), with the total system-bath Hamiltonian given by ($\hbar=1$)
	\begin{equation}\label{Hn}
		\mathbf{H} = H_{\mathrm{e}} \otimes \oper_{\rm B} + \oper_{\rm S} \otimes \sum_{j=1}^r \int \mathrm{d}\omega\;\omega b^\dagger_j(\omega) b_j(\omega) + H_{\rm int} ,
	\end{equation}
	where $H_{\e}$ is the free Hamiltonian of the excited sector of the system (the energy of the ground state is again set to zero), and the interaction term reads
	\begin{equation}\label{hnint}
		H_{\rm int} = \sum_{j=1}^r\int \mathrm{d}\omega\;f_j(\omega) |\mathrm{g}\>\<\bm{\beta}_j| \otimes b^\dagger_{j}(\omega) + \rm{h.c.},
	\end{equation}
	with $|\bm{\beta}_1\>,\dots,|\bm{\beta}_r\> \in \hilb_{\e}$. This model was first introduced in Ref.~\cite{garrawy}. Similarly as before, each function $f_j(\omega)$ (form factor) modulates the coupling between the system and the $j$th mode of the bath. The bath creation and annihilation operators $b^\dagger_j(\omega)$ and $b_j(\omega)$ satisfy the standard canonical commutation relations: $[b_i(\omega),b_{j}(\omega')]=0$ and $[b_i(\omega),b^\dag_{j}(\omega')]=\delta_{ij}\delta(\omega-\omega')$.
	
	\subsection{Reduced dynamics}
	Again, the single-excitation sector of this model is solvable. This model has a conserved number of excitations, namely
	\begin{equation}
		\mathcal{N}=\oper_{\rm S}\otimes\sum_{j=1}^r\int\mathrm{d}\omega\;b^\dag_{j}(\omega)b_{j}(\omega)+\left(\oper_{\rm S}-\ket{\g}\!\bra{\g}\right)\otimes\oper_{\mathrm{B}},
	\end{equation}
	meaning that $\textbf{H}$ decomposes into the direct sum of its restrictions on the eigenstates of $\mathcal{N}$. In particular, the single-excitation sector, spanned by vectors in the form
	\begin{equation}
		\ket{\Psi}=\ket{\psi_{\e}}\otimes\ket{\vac}+\sum_{j=1}^r\int\mathrm{d}\omega\;\xi_j(\omega)\,\ket{\g}\otimes b^\dag_{j}(\omega)\ket{\vac},
	\end{equation}
	is conserved. This allows us to compute exactly the evolution of any state $\ket{\psi_{\e}}\otimes\ket{\vac}$. The Schr\"odinger equation for the model with this initial condition yields the following linear system:
	\begin{equation}
		\begin{dcases}
			\i\ket{\dot\psi_{\e}(t)}=H_{\e}\ket{\psi_{\e}(t)}+\sum_{\ell=1}^r\int\mathrm{d}\omega\;f_\ell(\omega)^*\xi_\ell(t,\omega)\ket{\bm{\beta}_\ell};\\
			\i\,\dot\xi_j(t,\omega)=\omega\,\xi_j(t,\omega)+f_j(\omega)\braket{\bm{\beta}_j|\psi_{\e}(t)}.
		\end{dcases}
	\end{equation}
	For all $j=1,\dots,r$, we get
	\begin{equation}
		\xi_j(t,\omega)=-\i\int_0^t\mathrm{d}s\,\e^{-\i\omega(t-s)}f_j(\omega)\braket{\bm{\beta}_\ell|\psi_{\e}(s)},
	\end{equation}
	and thus, substituting into the differential equation for $\ket{\psi_{\e}(t)}$, we get
	\begin{eqnarray}
		\i\ket{\dot\psi_{\e}(t)}=
		H_{\e}\ket{\psi_\e(t)}+\int_0^t\mathrm{d}s\;\mathbb{G}(t-s)\ket{\psi_{\e}(s)},
	\end{eqnarray}
	where
	\begin{equation}\label{eq:g}
		\mathbb{G}(t)=-\i\sum_{\ell=1}^r\int\mathrm{d}\omega\;|f_\ell(\omega)|^2\e^{-\i\omega t}\ket{\bm{\beta_\ell}}\!\bra{\bm{\beta}_\ell}.
	\end{equation}
	Summing up, we have shown
	\begin{equation}\label{eq:single1}
		\e^{-\i t\textbf{H}}\ket{\psi_{\e}}\otimes\ket{\vac}=\mathbb{A}(t)\ket{\psi_{\e}}\otimes\ket{\vac}+\ket{\g}\otimes b\left(\xi_j(t,\cdot)\right)\ket{\vac},
	\end{equation}
	with $\mathbb{A}(t)$ being defined via $\ket{\psi_\e(t)}=\mathbb{A}(t)\ket{\psi_\e}$; besides, since $\textbf{H}\ket{\g}\otimes\ket{\vac}=0$, the evolution of the latter state is trivial,
	\begin{equation}\label{eq:single2}
		\e^{-\i t\textbf{H}}\ket{\g}\otimes\ket{\vac}=\ket{\g}\otimes\ket{\vac}.
	\end{equation}
	Finally, defining
	\begin{equation}\label{eq:red}
		\Lambda_t(\rho)=\mathrm{Tr}_{\rm B}\left(\e^{-\i t\mathbf{H}}\rho\otimes\ket{\vac}\!\bra{\vac}\e^{\i t\mathbf{H}}\right),
	\end{equation}
 	and using a natural splitting of the density operator of the system:
	\begin{equation}\label{}
		\rho= \left( \begin{array}{cc} \hat{\rho}_\e & |\mathbf{w}\> \\  \< \mathbf{w}| & \rho_\mathrm{g} \end{array} \right) ,
\end{equation}
	then, substituting Eqs.~\eqref{eq:single1}--\eqref{eq:single2} into Eq.~\eqref{eq:red}, and using the fact that, by unitarity, we must necessarily have
	\begin{equation}\label{eq:normalization}
		\|\ket{\psi_{\e}}\|^2+\sum_{j=1}^r\int\mathrm{d}\omega\;|\xi_j(t,\omega)|^2=1,
	\end{equation}
	we find the following formula for the reduced evolution:
	\begin{equation}\label{MAD}
		\Lambda_t(\rho) = \left( \begin{array}{cc} \mathbb{A}(t)\hat{\rho}_\e \mathbb{A}^\dagger(t) & \mathbb{A}(t)|\mathbf{w}\> \\  \< \mathbf{w}|\mathbb{A}^\dagger(t) & \rho_\mathrm{g}(t) \end{array} \right) ,
	\end{equation}
	with $\rho_\mathrm{g}(t) = \rho_{\g} +  {\rm Tr}\left[ (\oper_\e-\mathbb{A}^\dagger(t)\mathbb{A}(t))\hat{\rho}_\e\right]$. The operator $\mathbb{A}(t) : \mathcal{H}_\e \to \mathcal{H}_\e$ provides a multi-level generalization of the function $a(t)$ which appears in Eq.~\eqref{eq:amplitude}. We point out that other generalizations of the amplitude-damping channel were analyzed in Refs.~\cite{giovannetti,wilde}.
	
	\subsection{CP-divisibility, P-divisibility, and regression}
	Like in the qubit case, the map~\eqref{MAD} is completely positive and trace-preserving (CPTP) and, by construction, the condition $\|\mathbb{A}(t)\|_{\rm op}\leq 1$ holds at all times, where $\|\cdot\|_{\rm op}$ is the operator norm; vice versa, it can be proven that, for \textit{any} operator-valued function $\mathbb{A}(t)$, it is completely positive (or, \textit{equivalently}, positive) by assuming $\|\mathbb{A}(t)\|_{\rm op}\leq1$ at all times:		
\begin{Proposition}\label{prop:app}
	Given $\mathbb{A}(t)$, the following conditions are equivalent:
	\begin{enumerate}
		\item[(i)] $\Lambda_t$ is completely positive;
		\item[(ii)] $\Lambda_t$ is positive;
		\item[(iii)] $\|\mathbb{A}(t)\|_{\rm op}\leq1$.
	\end{enumerate}
\end{Proposition}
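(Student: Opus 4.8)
The implication (i)$\Rightarrow$(ii) is immediate, since complete positivity always entails positivity; the content of the statement is therefore the cycle (ii)$\Rightarrow$(iii)$\Rightarrow$(i), which I would establish in this order. Throughout I abbreviate $\mathbb{A}\equiv\mathbb{A}(t)$ and record that, directly from Eq.~\eqref{MAD}, $\Lambda_t$ is Hermiticity- and trace-preserving for \emph{any} choice of $\mathbb{A}$ (the trace term in $\rho_\g(t)$ is engineered precisely so that $\tr[\mathbb{A}\hat{\rho}_\e\mathbb{A}^\dagger]+\tr[(\oper_\e-\mathbb{A}^\dagger\mathbb{A})\hat{\rho}_\e]=\tr\hat{\rho}_\e$). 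Consequently the only property at stake is positivity of the output operator.

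For (ii)$\Rightarrow$(iii) I would feed the map a pure excited state $\rho=\ketbra{\psi_\e}{\psi_\e}$ with $\ket{\psi_\e}\in\hilb_\e$ and $\|\psi_\e\|=1$. Then $|\mathbf{w}\>=0$ and $\rho_\g=0$, so by Eq.~\eqref{MAD} the image is block-diagonal: its excited block $\mathbb{A}\ketbra{\psi_\e}{\psi_\e}\mathbb{A}^\dagger$ is positive semidefinite automatically, while its ground entry equals $\rho_\g(t)=1-\|\mathbb{A}\psi_\e\|^2$. Since a block-diagonal Hermitian matrix is positive iff each block is, positivity of $\Lambda_t(\rho)$ forces the scalar to be non-negative, i.e. $\|\mathbb{A}\psi_\e\|\leq1$; taking the supremum over unit vectors $\ket{\psi_\e}$ gives $\|\mathbb{A}\|_{\rm op}\leq1$. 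This single family of test states already extracts the full constraint, which is the structural reason that positivity and complete positivity will coincide.

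For (iii)$\Rightarrow$(i) I would exhibit an explicit operator-sum representation. The bound $\|\mathbb{A}\|_{\rm op}\leq1$ is equivalent to $\oper_\e-\mathbb{A}^\dagger\mathbb{A}\geq0$, so I may decompose $\oper_\e-\mathbb{A}^\dagger\mathbb{A}=\sum_k\ketbra{v_k}{v_k}$ for suitable $\ket{v_k}\in\hilb_\e$. I then take $K_0$ to act as $\mathbb{A}$ on $\hilb_\e$ and as $1$ on $\ket{\g}$, together with the jump operators $K_k=\ketbra{\g}{v_k}$. A direct block computation gives $K_0\rho K_0^\dagger$ equal to the excited block, both coherence blocks, and the bare $\rho_\g$ term of Eq.~\eqref{MAD}, whereas $\sum_k K_k\rho K_k^\dagger=\big(\sum_k\braket{v_k|\hat{\rho}_\e|v_k}\big)\ketbra{\g}{\g}=\tr[(\oper_\e-\mathbb{A}^\dagger\mathbb{A})\hat{\rho}_\e]\,\ketbra{\g}{\g}$, which is exactly the remaining ground contribution. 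Hence $\Lambda_t(\rho)=K_0\rho K_0^\dagger+\sum_k K_k\rho K_k^\dagger$ is manifestly completely positive, and $\sum_\alpha K_\alpha^\dagger K_\alpha=\oper_{\rm S}$ re-confirms trace preservation.

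The only genuinely non-automatic part is thus the direction that builds a channel out of the norm bound: the mild obstacle is guessing the right Kraus family, namely recognizing that the excited population lost under $\mathbb{A}$ must be funneled into the ground state through the rank-one operators $\ketbra{\g}{v_k}$ extracted from a decomposition of the positive operator $\oper_\e-\mathbb{A}^\dagger\mathbb{A}$. Once this ansatz is in place, the verification collapses to the block identities above, and the equivalence of all three conditions follows.
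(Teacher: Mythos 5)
Your proof is correct and follows essentially the same route as the paper: the same Kraus family $\bigl(K_0=\mathbb{A}\oplus 1,\ K_k=\ketbra{\g}{v_k}\bigr)$ built from a decomposition of $\oper_\e-\mathbb{A}^\dagger\mathbb{A}$ for (iii)$\Rightarrow$(i), and the same pure-excited-state test for the positivity direction, which the paper merely phrases contrapositively. No substantive difference.
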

\begin{proof} (i)$\implies$(ii) is obvious. Let us prove (iii)$\implies$(i). The condition $\|\mathbb{A}(t)\|_{\rm op}\leq1$ is equivalent to $\oper_{\rm e}-\mathbb{A}^\dag(t)\mathbb{A}(t)$ being a positive operator, implying that its spectral decomposition can be written as
	\begin{equation}
		\oper_{\rm e}-\mathbb{A}^\dag(t)\mathbb{A}(t)=\sum_{\ell=1}^n\ket{\bm{\delta}_\ell(t)}\!\bra{\bm{\delta}_\ell(t)}
	\end{equation}
	for some family of vectors $\ket{\bm{\delta}_\ell(t)}$, $\ell=1,\dots,n$. But then the operators
	\begin{equation}
		\mathbb{K}_0(t)=\begin{pmatrix}
			\mathbb{A}(t)&\ket{\bm{0}}\\
			\bra{\bm{0}}&1
		\end{pmatrix},\qquad	\mathbb{K}_\ell(t)=\begin{pmatrix}
			0&\ket{\bm{0}}\\
			\bra{\bm{\delta_\ell}(t)}&0
		\end{pmatrix},
	\end{equation}
	are easily proven to be Kraus operators of $\Lambda_t$, which is thus completely positive.
	
	Finally, let us prove (ii)$\implies$(iii); equivalently, we will prove that, if (iii) does not hold, then (ii) does not hold. Suppose (iii) to be false; then there exists a vector $\ket{\bm{u}_t}$ such that $\bra{\bm{u}_t}\mathbb{A}^\dag(t)\mathbb{A}(t)\ket{\bm{u}_t}>\braket{\bm{u}_t|\bm{u}_t}$, and thus
	\begin{equation}
		\tr\left([\oper_{\rm e}-\mathbb{A}^\dag(t)\mathbb{A}(t)]\ket{\bm{u}_t}\!\bra{\bm{u}_t}\right)<0.
	\end{equation}
	But then one immediately notices that $\Lambda_t$ maps the positive matrix
	\begin{equation}
		\begin{pmatrix}
			\ket{\bm{u}_t}\!\bra{\bm{u}_t}&\ket{\bm{0}}\\\bra{\bm{0}}&0
		\end{pmatrix}
	\end{equation}
	into a matrix with a negative eigenvalue; therefore, $\Lambda_t$ is not a positive map.
\end{proof}
As a result, a generalization of Prop.~\ref{prop:div} holds:
	\begin{Proposition}\label{prop:div2}
		Let $\Lambda_t$ as defined in Eq.~\eqref{eq:amplitude}, with $\det\mathbb{A}(t)\neq0$. Then the process is invertible, and the following conditions are equivalent:
		\begin{itemize}
			\item[(i)] $\Lambda_t$ is CP-divisible;
			\item[(ii)] $\Lambda_t$ is P-divisible;
			\item[(iii)] For all $t\geq0$, the following condition holds:
			\begin{equation}\label{MON2}
				\frac{\mathrm{d}}{\mathrm{d}t} \| \mathbb{A}(t) \|_{\rm op} \leq 0.
			\end{equation}
		\end{itemize}
	\end{Proposition}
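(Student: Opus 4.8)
The plan is to reduce the entire statement to Proposition~\ref{prop:app} by recognizing that the propagator $V_{t,s}$ is itself a map of the form~\eqref{MAD}. First I would settle invertibility. Since $\det\mathbb{A}(t)\neq0$, the operator $\mathbb{A}(t)$ is invertible, and reading~\eqref{MAD} block by block gives an explicit preimage: the excited block fixes $\hat{\rho}_\e=\mathbb{A}^{-1}(t)[\,\cdot\,]\mathbb{A}^{-\dagger}(t)$, the coherence block fixes $|\mathbf{w}\>=\mathbb{A}^{-1}(t)[\,\cdot\,]$, and the ground block then fixes $\rho_\g$. Hence $\Lambda_t$ is a bijection and $\Lambda_t^{-1}$ is again trace preserving.

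Next I would compute $V_{t,s}=\Lambda_t\circ\Lambda_s^{-1}$ and check by direct substitution that it has \emph{exactly} the structure~\eqref{MAD} with $\mathbb{A}$ replaced by $\mathbb{B}(t,s):=\mathbb{A}(t)\mathbb{A}^{-1}(s)$. Explicitly, the excited block transforms as $\sigma_\e\mapsto\mathbb{B}(t,s)\,\sigma_\e\,\mathbb{B}^\dagger(t,s)$, the coherence block as $|\mathbf u\>\mapsto\mathbb{B}(t,s)|\mathbf u\>$, and the ground block acquires $\tr[(\oper_{\rm e}-\mathbb{B}^\dagger(t,s)\mathbb{B}(t,s))\sigma_\e]$, the intermediate terms in $\mathbb{A}^\dagger(s)\mathbb{A}(s)$ cancelling after conjugation by $\mathbb{A}^{-1}(s)$. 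With this identification, Proposition~\ref{prop:app} applied to $V_{t,s}$ gives at once that $V_{t,s}$ is CPTP iff it is positive iff $\|\mathbb{B}(t,s)\|_{\rm op}\leq1$. This already proves (i)$\iff$(ii), both being equivalent to $\|\mathbb{A}(t)\mathbb{A}^{-1}(s)\|_{\rm op}\leq1$ for all $t>s$.

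It then remains to link this contractivity to the monotonicity~\eqref{MON2}. One direction is immediate from submultiplicativity: if $\|\mathbb{B}(t,s)\|_{\rm op}\leq1$, then for $t>s$ one has $\|\mathbb{A}(t)\|_{\rm op}=\|\mathbb{B}(t,s)\mathbb{A}(s)\|_{\rm op}\leq\|\mathbb{A}(s)\|_{\rm op}$, which is~\eqref{MON2}. For the converse I would pass to the infinitesimal level: writing $\mathbb{B}(t+\mathrm dt,t)=\oper_{\rm e}+L(t)\,\mathrm dt$ with $L(t)=\dot{\mathbb{A}}(t)\mathbb{A}^{-1}(t)$, a first-order expansion shows that $\|\mathbb{B}(t+\mathrm dt,t)\|_{\rm op}\leq1$ is equivalent to $L(t)+L^\dagger(t)\leq0$, i.e.\ to $\frac{\mathrm d}{\mathrm dt}\big(\mathbb{A}^\dagger(t)\mathbb{A}(t)\big)\leq0$ in the operator (L\"owner) order, since $L+L^\dagger=\mathbb{A}^{-\dagger}\frac{\mathrm d}{\mathrm dt}(\mathbb{A}^\dagger\mathbb{A})\mathbb{A}^{-1}$.

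The step I expect to be genuinely delicate, and would treat as the heart of the proof, is showing that the \emph{scalar} monotonicity~\eqref{MON2} of the operator norm already forces this \emph{operator} inequality. Differentiating $\|\mathbb{A}(t)\|_{\rm op}$ controls only the top singular direction of $\mathbb{A}(t)$, whereas contractivity of $\mathbb{B}(t,s)$ constrains every direction, so the two are not equivalent for a generic invertible family and the gap must be closed using the specific structure at hand. Concretely, I would try to verify that the trace-norm criterion for P-divisibility, Eq.~\eqref{<0}, evaluated on the block-Hermitian operators $X$ adapted to~\eqref{MAD}, collapses for \emph{every} $X$ onto the single condition~\eqref{MON2}; failing that, I would use the explicit form of $\mathbb{A}(t)$ generated by the kernel~\eqref{eq:g} to control all of its singular values simultaneously. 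This equivalence between operator-norm monotonicity and propagator contractivity is the main obstacle; the reduction to Proposition~\ref{prop:app} through $\mathbb{B}(t,s)$ is by comparison routine.
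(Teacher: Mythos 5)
The part of your argument you call ``routine'' is, in fact, the entirety of the paper's proof: the paper writes the (unique, by invertibility of $\Lambda_s$) propagator $V_{t,s}=\Lambda_t\Lambda_s^{-1}$ in exactly the form~\eqref{MAD2}, with $\mathbb{A}(t,s)=\mathbb{A}(t)\mathbb{A}(s)^{-1}$ playing the role of your $\mathbb{B}(t,s)$, and invokes Proposition~\ref{prop:app} to conclude that $V_{t,s}$ is completely positive iff it is positive iff $\|\mathbb{A}(t,s)\|_{\rm op}\leq1$, which yields (i)$\iff$(ii). Up to that point your proposal and the paper coincide, step for step.

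The step you single out as the heart of the matter --- deducing contractivity of every $\mathbb{A}(t)\mathbb{A}(s)^{-1}$ from the scalar monotonicity~\eqref{MON2} --- is precisely the step the paper does not prove: it passes from $\|\mathbb{A}(t,s)\|_{\rm op}\leq1$ to $\|\mathbb{A}(t)\|_{\rm op}\leq\|\mathbb{A}(s)\|_{\rm op}$ with the words ``that is'', as though the two conditions were identical. Submultiplicativity gives contractivity $\Rightarrow$ monotonicity, as you wrote, and for $n=1$ the converse is trivial (which is why the qubit case, Proposition~\ref{prop:div}, is safe), but for $n\geq2$ your suspicion is correct: the converse fails for a generic invertible family, so the proposition, read as a statement about an arbitrary $\mathbb{A}(t)$ with $\det\mathbb{A}(t)\neq0$ (which is all the hypotheses supply), is not established by this argument. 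Concretely, take
\[
\mathbb{A}(t)=\begin{pmatrix} \e^{-t} & 0 \\ 0 & \tfrac12\,\e^{-t}\left(1+\tfrac{9}{10}\sin t\right) \end{pmatrix}.
\]
Then $\det\mathbb{A}(t)\neq0$ and, since $\tfrac12\left(1+\tfrac{9}{10}\sin t\right)\leq\tfrac{19}{20}<1$, one has $\|\mathbb{A}(t)\|_{\rm op}=\e^{-t}$, so~\eqref{MON2} holds at all times; yet at $t_0=7\pi/4$,
\[
\frac{\mathrm{d}}{\mathrm{d}t}\ln\left[\tfrac12\,\e^{-t}\left(1+\tfrac{9}{10}\sin t\right)\right]
=-1+\frac{(9/10)\cos t}{1+(9/10)\sin t}\approx 0.75>0,
\]
so for $s=t_0$ and $t$ slightly larger the lower-right entry of $\mathbb{A}(t)\mathbb{A}(s)^{-1}$ exceeds $1$, and by Proposition~\ref{prop:app} the propagator $V_{t,s}$ is not even positive. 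Hence (iii) does not imply (ii) at this level of generality. The correct infinitesimal condition is the one you derived, $L(t)+L(t)^\dagger\leq0$, i.e.\ $\frac{\mathrm{d}}{\mathrm{d}t}\left(\mathbb{A}^\dagger(t)\mathbb{A}(t)\right)\leq0$ in the L\"owner order --- monotonicity of \emph{all} singular values, not only the largest. Closing the gap would require either strengthening (iii) to that condition, or showing that an $\mathbb{A}(t)$ actually generated by the memory kernel~\eqref{eq:g} cannot exhibit the pathology above; neither your proposal nor the paper does this. So you have not missed an idea that the paper supplies: you have correctly located a hole that the paper's own proof silently steps over, and your refusal to declare that step routine is the right instinct.
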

	\begin{proof}
	Suppose $\mathbb{A}(t)$ to be invertible for all $t$, so that $\mathbb{A}(t)^{-1}$ exists. Then a direct calculation shows that, for all $t\geq s\geq0$, $\Lambda_t=V_{t,s}\Lambda_s$, where
		\begin{equation}\label{MAD2}
			V_{t,s}(\rho) = \left( \begin{array}{cc} \mathbb{A}(t,s)\hat{\rho}_\e \mathbb{A}(t,s)^\dag & \mathbb{A}(t,s)|\mathbf{w}\> \\  \< \mathbf{w}|\mathbb{A}(t,s)^\dag  & \rho_\g(t,s) \end{array} \right),
		\end{equation}
where $\mathbb{A}(t,s):=\mathbb{A}(t)\mathbb{A}(s)^{-1}$ and $\rho_g(t,s):=\rho_\g +  {\rm Tr}\left[ \left(\oper_\e-\mathbb{A}(t,s)^\dag\mathbb{A}(t,s)\right)\hat{\rho}_\e\right]$. Therefore, by Prop.~\ref{prop:app}, $V_{t,s}$ is (completely) positive if and only if
	\begin{equation}
		\|\mathbb{A}(t,s)\|_{\rm op}\leq1,
	\end{equation}
	that is, $\|\mathbb{A}(t)\|_{\rm op}\leq\|\mathbb{A}(s)\|_{\rm op}$, i.e. $t\mapsto\|\mathbb{A}(t)\|_{\rm op}$ is monotonically decreasing, which in turn is equivalent to Eq.~\eqref{MON2}.
\end{proof}
	
	Furthermore, a dynamical semigroup can only be obtained provided that the operator $\mathbb{A}(t)$ satisfies the property $\mathbb{A}(t)=\mathbb{A}(t-s)\mathbb{A}(s)$. Again, there is an essentially unique choice of form factors $f_j(\omega)$ yielding semigroup dynamics: choosing $|f_j(\omega)|^2 = 1/2\pi$ (via a renormalization procedure analogous to the one discussed in Ref.~\cite{fl} for the case $n=r=1$), one finds $\mathbb{A}(t) = \e^{-(\i \tilde{H}_\e + \Gamma/2)t}$, where
	\begin{equation}
		\Gamma = \sum_{\ell=1}^r \ket{\bm{\beta}_\ell}\!\bra{\bm{\beta}_\ell}
	\end{equation}
	and $\tilde{H}_\e$ is the renormalized energy operator of the excited sector. We stress that the rank of the dissipation operator $\Gamma$ coincides with the number of modes of the boson bath.
	
	Finally, using the same techniques that we exploited in the qubit case, one proves the following multilevel counterpart of Theorem~\ref{thm}:
	\begin{Theorem}\label{thm2} Let $\rho_{\rm B} = |{\rm vac}\>\<{\rm vac}|$, and $\mathcal{U}_t$ the evolution group generated by the Hamiltonian~\eqref{Hn} with $|f_j(\omega)|^2=\mathrm{const.}$ for $\omega\in(-\infty,\infty)$. Then $\left(\mathcal{U}_t,\rho_{\rm B}\right)$ satisfies the quantum regression hierarchy~\eqref{QR}.
	\end{Theorem}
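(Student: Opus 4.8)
The plan is to reduce Eq.~\eqref{QR} to the two-time case ($n=1$) and to mirror, step by step, the direct cancellation carried out for the qubit in Appendix~\ref{app:c}, now with the scalar $a(t)$ replaced by the operator $\mathbb{A}(t)$ and the form factor by the coupling vectors $\ket{\bm{\beta}_\ell}$. By linearity it suffices to treat a rank-one input $\rho=\ketbra{\psi_\e}{\phi_\e}$, since the ground component $\ket{\g}\otimes\ket{\vac}$ is left invariant by $U_t:=\e^{-\i t\mathbf{H}}$. For a general ordered set $t_n>\dots>t_0$ the argument then iterates: the photons emitted in each interval play the role of \emph{spectators} for the subsequent evolution, so the $n$-time identity follows from the two-time mechanism applied at every step. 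First I would expand the left-hand side of~\eqref{QR} by inserting the explicit single-excitation solution~\eqref{eq:single1}--\eqref{eq:single2} and writing each system operator $X_k,Y_k$ in excited/ground block form, with $b^\dagger(\xi):=\sum_j\int\mathrm{d}\omega\,\xi_j(\omega)b^\dagger_j(\omega)$. The insertion $\tilde{\mathcal{E}}_0$ acts only on the system and may therefore promote the already-emitted one-photon component $\ket{\g}\otimes b^\dagger(\xi^\psi_{t_0})\ket{\vac}$ back into the excited sector, producing a two-excitation contribution $\ket{\bm{u}}\otimes b^\dagger(\xi^\psi_{t_0})\ket{\vac}$ whose evolution under $U_{t_1-t_0}$ is \emph{not} encoded in $\Lambda_t$. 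Controlling these terms is the whole point.

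The key step I would isolate as a lemma is a \textbf{spectator-photon factorization}: for flat coupling $|f_j(\omega)|^2=1/2\pi$ and for any profile $\xi$ emitted during a past interval, one has, for $\ket{\chi}\in\hilb_\e$ and $t>0$,
\begin{equation}\label{eq:spectator}
U_t\bigl(\oper_{\rm S}\otimes b^\dagger(\xi)\bigr)\ket{\chi}\otimes\ket{\vac}=\bigl(\oper_{\rm S}\otimes b^\dagger(\e^{-\i\omega t}\xi)\bigr)\,U_t\,\ket{\chi}\otimes\ket{\vac},
\end{equation}
where $b^\dagger(\e^{-\i\omega t}\xi):=\sum_j\int\mathrm{d}\omega\,\e^{-\i\omega t}\xi_j(\omega)b^\dagger_j(\omega)$; that is, $U_t$ commutes with the creation of such a photon, up to its free evolution. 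To prove~\eqref{eq:spectator} I would solve the Schr\"odinger equation in the two-excitation sector: the excited/one-photon amplitude $\ket{\mu_k(t,\omega)}$ and the ground/two-photon amplitude form a closed linear system, and for the flat kernel $\mathbb{G}(t)=-\i\,\delta(t)\,\Gamma$ of Eq.~\eqref{eq:g} the only term that could \emph{re-absorb} the pre-existing photon through the vertex $\ket{\bm{\beta}_j}\bra{\g}\otimes b_j$ is proportional to $\int\mathrm{d}\omega\,\e^{-\i\omega t}\xi^\psi_j(t_0,\omega)$. Inserting the explicit profile, this equals (up to a constant) $\int_0^{t_0}\mathrm{d}s\,\braket{\bm{\beta}_j|\mathbb{A}(s)\psi_\e}\,\delta(t+t_0-s)$, which vanishes for every $t>0$ because $t+t_0>t_0$ lies outside $[0,t_0]$; equivalently, a $\delta$-correlated bath forbids re-absorption at later times. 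The residual self-action reduces to $-\tfrac12\Gamma$, so $\ket{\mu_k(t,\omega)}=\e^{-\i\omega t}\xi^\psi_k(t_0,\omega)\,\mathbb{A}(t)\ket{\bm{u}}$, which establishes~\eqref{eq:spectator}.

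Granting the lemma, I would then assemble the two-time correlator. Applying~\eqref{eq:spectator} at each evolution step, every term of the left-hand side factorizes into freely-evolving spectator photons tensored with a single-excitation string in which the operators $X_k,Y_k$ and the propagators $\mathbb{A}(t_k-t_{k-1})$ appear in exactly the order dictated by the right-hand side of~\eqref{QR}. Taking $\mathrm{Tr}_{\rm B}$ then contracts the spectator photons pairwise, each contraction producing an overlap $\braket{\xi^\phi_{s}|\xi^\psi_{s}}$; by the unitarity/normalization relation~\eqref{eq:normalization}, polarized in $\ket{\psi_\e},\ket{\phi_\e}$, these overlaps are precisely the matrix elements $\bra{\phi_\e}(\oper_\e-\mathbb{A}^\dagger(s)\mathbb{A}(s))\ket{\psi_\e}$ that constitute the ground-ground block of $\Lambda_s$ in Eq.~\eqref{MAD}. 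Matching the resulting expression term by term against $\mathrm{Tr}_{\rm S}[\mathcal{E}_1\Lambda_{t_1-t_0}\mathcal{E}_0\Lambda_{t_0}(\rho)]$ yields the desired equality, completing the two-time case; the full hierarchy follows by the same factorization applied recursively, reproducing the multilevel counterpart of Theorem~\ref{thm}.

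The hard part will be the spectator lemma~\eqref{eq:spectator} in the higher-excitation sectors required by the $n$-time identity. There one must create several spectator photons, emitted in distinct past intervals, and show that \emph{all} bosonic exchange terms—those coupling a newly emitted photon to a previously emitted one through the absorption vertex $\ket{\bm{\beta}_j}\bra{\g}\otimes b_j$—vanish. Each such term again carries a factor $\int\mathrm{d}\omega\,\e^{-\i\omega t}\xi(\omega)$ with $\xi$ a past-emitted profile, so it vanishes for the same causal reason; the genuine difficulty is purely combinatorial, namely organizing the symmetrization of the $k$-photon amplitudes so that these cancellations can be read off uniformly in $k$. Once this bookkeeping is in place, the multilevel proof is a verbatim transcription of the qubit argument of Appendix~\ref{app:c}.
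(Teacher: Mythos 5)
Your plan is correct and follows essentially the same route as the paper's own proof (Appendix~\ref{app:c}): reduce to the two-time case, use the flat-coupling intertwining relation $U_{t}\,\bigl(\oper_{\rm S}\otimes b^\dagger(\xi_{t_0})\bigr) = \bigl(\oper_{\rm S}\otimes b^\dagger(\e^{-\i\omega t}\xi_{t_0})\bigr)\,U_{t}$ (your ``spectator-photon factorization''), and show that all cross terms vanish because profiles emitted in disjoint time intervals are causally orthogonal, while the surviving same-interval overlaps reproduce the blocks of $\Lambda_t$ via the polarized normalization identity. The only differences are inessential: the paper imports the intertwining relation from Ref.~\cite{hidden2} instead of rederiving it from the two-excitation Schr\"odinger dynamics as you do, and it writes out the bookkeeping explicitly only for the qubit two-time case, asserting --- just as you do --- that the multi-time, multi-level cases follow from the same mechanism with more tedious algebra.
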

	Again, quantum regression is therefore satisfied under the same exact assumptions that ensure semigroup dynamics.
	
	\section{Conclusions}
	In this article we have analyzed CP-divisibility and quantum regression for a class of generalized spin-boson (GSB) models, describing an atom interacting with a multi-mode boson field, leading to a class of CPTP maps generalizing the amplitude-damping (AD) qubit channels~\cite{Breuer,Legget,Weiss}. Necessary and sufficient conditions for CP-divisibility have been found; moreover, and most importantly, we have shown that it is possible to engineer these models in such a way that quantum regression is exactly and provably satisfied. This provides, up to our knowledge, a first nontrivial class of open systems satisfying this requirement.
	
	We stress that, while the validity of quantum regression for an open quantum system is in general a much stronger condition than the semigroup property for its reduced dynamics, the two notions happen to coincide for the class of models analyzed here. Interestingly, a similar phenomenon has been recently proven to happen for the dephasing-type spin-boson model yielding phase-damping (PD) qubit channels \cite{dephasing}. Whether such a phenomenon is a more general feature remains an open question. Future works will be devoted to the extension of this analysis to other instances of open quantum systems.
	
	Finally, we point out an interesting proposal for checking Markovianity via the so-called conditional past-future independence~\cite{Budini1,Budini2,Budini3}. Again, conditional past-future independence is not guaranteed beyond the Born-Markov approximation. It would be interesting to test spin-boson models and their generalizations using this approach.\\
	
	\section*{Acknowledgments}
	
	D.L. was partially supported by Istituto Nazionale di Fisica Nucleare (INFN) through the project “QUANTUM” and by the Italian National Group of Mathematical Physics (GNFM-INdAM); he also thanks the Institute of Physics at the Nicolaus University in Toru\'n for its  hospitality. D.C. was supported by the Polish National Science Centre Project No. 2018/30/A/ST2/00837. 
	
	\appendix
	
	\begin{widetext}
		
	\section{Proof of quantum regression}\label{app:c}
		
		In this appendix we will provide a proof, in the two-time scenario, of Theorems~\ref{thm}--\ref{thm2}, that is, we will show that
		
		\begin{equation}\label{}
			{\rm Tr}_{\rm B} \Big\{ \mathcal{U}_{t_1-t_0} \Big(X \otimes \oper_{\rm B}) \mathcal{U}_{t_0}( \mathbf{M} \otimes |{\rm vac}\>\<{\rm vac}|) (Y \otimes \oper_{\rm B})\Big)\Big\} = \Lambda_{t_1-t_0}(X\Lambda_{t_0}(\mathbf{M})Y) ,
		\end{equation}
		for $\mathbf{M} = |\e\>\<\e|,|\e\>\<\g|,|\g\>\<\e|,|\g\>\<\g|$.
		\vspace{-0.5cm}
		
		\subsection{$\mathbf{M} = |\e\>\<\e|$}
		
		One has
		
		\begin{equation}\label{}
			{\rm Tr}_{\rm B} \Big\{\mathcal{U}_{t_1-t_0} \Big(X \otimes \oper_{\rm B}) \mathcal{U}_{t_0}(|\e\>\<\e| \otimes |{\rm vac}\>\<{\rm vac}|) (Y \otimes \oper_{\rm B})\Big)\Big\} = \ell_{\e\e}(t_1,t_0) + \ell_{\e\g}(t_1,t_0) +\ell_{\g\e}(t_1,t_0) + \ell_{\g\g}(t_1,t_0) ,
		\end{equation}
		where
		
		\begin{eqnarray}
			\ell_{\e\e}(t_1,t_0)  &=&  |a(t_0)|^2 {\rm Tr}_{\rm B} \Big\{\mathcal{U}_{t_1-t_0} \Big( X |\e\>\<\e|Y \otimes |{\rm vac}\>\<{\rm vac}| \Big)\Big\} , \\
			\ell_{\e\g}(t_1,t_0)  &=&  a(t_0) {\rm Tr}_{\rm B} \Big\{\mathcal{U}_{t_1-t_0} \Big( X |\e\>\<\g|Y \otimes |{\rm vac}\>\<{\rm vac}|b(\xi^*_{t_0}) \Big)\Big\} , \\
			\ell_{\g\e}(t_1,t_0)  &=&  a^*(t_0) {\rm Tr}_{\rm B} \Big\{\mathcal{U}_{t_1-t_0} \Big( X |\g\>\<\e|Y \otimes b^\dagger(\xi_{t_0})|{\rm vac}\>\<{\rm vac}| \Big)\Big\} , \\
			\ell_{\g\g}(t_1,t_0)  &=&   {\rm Tr}_{\rm B} \Big\{\mathcal{U}_{t_1-t_0} \Big( X |\g\>\<\g|Y \otimes  b^\dagger(\xi_{t_0})|{\rm vac}\>\<{\rm vac}|  b(\xi^*_{t_0}) \Big)\Big\} .
		\end{eqnarray}
		Now,
		
		\begin{equation}\label{}
			\Lambda_{t_1-t_0}(X\Lambda_{t_0}(|\e\>\<\e|)Y) = m_{\e\e}(t_1,t_0) + m_{\g\g}(t_1,t_0) ,
		\end{equation}
		where
		
		\begin{eqnarray}
			m_{\e\e}(t_1,t_0)  &=& |a(t_0)|^2 \Lambda_{t_1-t_0}( X |\e\>\<\e|Y ) , \\
			m_{\g\g}(t_1,t_0)  &=& \|\xi_{t_0}\|^2 \Lambda_{t_1-t_0}( X |\g\>\<\g|Y ) .
		\end{eqnarray}
		It is clear that $\ell_{\e\e}(t_1,t_0) = m_{\e\e}(t_1,t_0)$. We prove that
		
		\begin{equation}\label{}
			\ell_{\e\g}(t_1,t_0) =  \ell_{\g\e}(t_1,t_0) = 0\ , \ \ \ell_{\g\g}(t_1,t_0) = m_{\g\g}(t_1,t_0) .
		\end{equation}
		Let
		
		\begin{equation}\label{XY}
			X = \begin{pmatrix}
				x_\e & x_1 \\
				x_2 & x_g
			\end{pmatrix} \ , \ \ \  Y = \begin{pmatrix}
				y_\e & y_1 \\
				y_2 & y_g
			\end{pmatrix}
		\end{equation}
		Now, using the following relation \cite{hidden2}
		
		\begin{equation}\label{}
			{U}_{t_1 -t_0} \oper_{\rm S} \otimes b^\dagger(\xi_{t_0}) =   \oper_{\rm S} \otimes b^\dagger(\xi_{t_1,t_0}) {U}_{t_1 -t_0} ,
		\end{equation}
		where
		
		\begin{equation}\label{}
			\xi_{t_1,t_0}(\omega) = \e^{-\i(t_1-t_0)\omega} \xi_{t_0}(\omega) ,
		\end{equation}
		one has
		\begin{eqnarray}\label{}
			&  U_{t_1-t_0}  X |\e\> \otimes |{\rm vac}\> =  U_{t_1-t_0} ( x_\e |\e \> + x_2 |\g\>) \otimes   |{\rm vac}\> \nonumber \\
			& = x_\e \Big( a(t_1-t_0) |\e\> \otimes |{\rm vac}\>  + |\g\> \otimes b^\dagger(\xi_{t_1-t_0}) |{\rm vac}\> \Big) + x_2 |\g\> \otimes   |{\rm vac}\> ,
		\end{eqnarray}
		and
		
		\begin{eqnarray}\label{}
			&  U_{t_1-t_0}  X |\g\> \otimes  b^\dagger(\xi_{t_0})|{\rm vac}\> =  U_{t_1-t_0} \oper_{\rm S} \otimes  b^\dagger(\xi_{t_0})  U^\dagger_{t_1-t_0}  U_{t_1-t_0} X |\g\> \otimes  |{\rm vac}\> = \oper_{\rm S} \otimes  b^\dagger(\xi_{t_1,t_0}) U_{t_1-t_0} X |\g\> \otimes  |{\rm vac}\> \nonumber \\
			& = x_1 \Big( a(t_1-t_0) |\e\> \otimes b^\dagger(\xi_{t_1,t_0}) |{\rm vac}\> + |\g\> \otimes b^\dagger(\xi_{t_1,t_0}) b^\dagger(\xi_{t_1-t_0}) |{\rm vac}\> \Big) + x_\g |\g\> \otimes b^\dagger(\xi_{t_1,t_0}) |{\rm vac}\> .
		\end{eqnarray}
		Using the following properties:
		
		\begin{eqnarray}
			\< {\rm vac}|\, b(\xi^*_{t_1,t_0}) b^\dagger(\xi_{t_1-t_0})\,| {\rm vac}\>   &=& 0 , \label{xi1}\\
			\| \, b^\dagger(\xi_{t_1,t_0})  b^\dagger(\xi_{t_1-t_0})|{\rm vac}\> \, \|^2 &=& \|\xi_{t_1-t_0}\|^2 \|\xi_{t_0}\|^2 \label{xi2}
		\end{eqnarray}
		one immediately finds $\ell_{\e\g}(t_1,t_0) =  \ell_{\g\e}(t_1,t_0) = 0$, and $\ell_{\g\g}(t_1,t_0) = m_{\g\g}(t_1,t_0)$.

		\subsection{$\mathbf{M} = |\e\>\<\g|$}
		
		One has
		
		\begin{equation}\label{eg}
			{\rm Tr}_{\rm B} \Big\{\mathcal{U}_{t_1-t_0} \Big(X \otimes \oper_{\rm B}) \mathcal{U}_{t_0}(|\e\>\<\g| \otimes |{\rm vac}\>\<{\rm vac}|) (Y \otimes \oper_{\rm B})\Big)\Big\} =  \ell_{\e\g}(t_1,t_0)  + \ell_{\g\g}(t_1,t_0) ,
		\end{equation}
		where
		
		\begin{eqnarray}
			\ell_{\e\g}(t_1,t_0)  &=&  a(t_0) {\rm Tr}_{\rm B} \Big\{\mathcal{U}_{t_1-t_0} \Big( X |\e\>\<\g|Y \otimes |{\rm vac}\>\<{\rm vac}| \Big)\Big\} , \\
			\ell_{\g\g}(t_1,t_0)  &=&   {\rm Tr}_{\rm B} \Big\{\mathcal{U}_{t_1-t_0} \Big( X |\g\>\<\g|Y \otimes  b^\dagger(\xi_{t_0})|{\rm vac}\>\<{\rm vac}|  \Big)\Big\} .
		\end{eqnarray}
		Now,
		
		\begin{equation}\label{}
			\Lambda_{t_1-t_0}(X\Lambda_{t_0}(|\e\>\<\g|)Y) = a(t_0) \Lambda_{t_1-t_0}(X|\e\>\<\g|Y) =  \ell_{\e\g}(t_1,t_0) ,
		\end{equation}
		hence to prove Eq.~\eqref{eg} one has to show that $  \ell_{\g\g}(t_1,t_0) =0$. Again, using Eqs.~\eqref{xi1}--\eqref{xi2} one finds that indeed $\ell_{\g\g}(t_1,t_0)$ does vanish. The proof for $\mathbf{M} =|\g\>\<\e|$ goes along the same line.
		
		\subsection{$\mathbf{M} = |\g\>\<\g|$}
		
		One has
		
		\begin{eqnarray}\label{}
			& {\rm Tr}_{\rm B} \Big\{\mathcal{U}_{t_1-t_0} \Big(X \otimes \oper_{\rm B}) \mathcal{U}_{t_0}(|\g\>\<\g| \otimes |{\rm vac}\>\<{\rm vac}|) (Y \otimes \oper_{\rm B})\Big) \Big\} \nonumber \\
			& = {\rm Tr}_{\rm B} \Big\{\mathcal{U}_{t_1-t_0} (X|\g\>\<\g|Y \otimes |{\rm vac}\>\<{\rm vac}|) \Big\} = \Lambda_{t_1-t_0}(X\Lambda_{t_0}(|\g\>\<\g|)Y) .
		\end{eqnarray}
		The proof for the multi-time scenario goes the same scheme. One can easily observe that, for the multi-level generalization, the technique is analogous but the algebra is of course more tedious.
		
	\end{widetext}

\end{document}